\documentclass[11pt]{article}
\pdfoutput=1
\usepackage[T1]{fontenc}
\usepackage[utf8]{inputenc}
\usepackage{lmodern}
\usepackage[margin=1in]{geometry}

\usepackage{amsmath}
\usepackage{amssymb}
\usepackage{amsthm}
\usepackage{bm}
\usepackage{hyperref}
\usepackage{dsfont}
\usepackage{booktabs}
\usepackage{microtype}
\DeclareMathOperator*{\argmin}{arg\,min}

\newtheorem{theorem}{Theorem}[section]
\newtheorem{lemma}[theorem]{Lemma}

\begin{document}

\title{\textbf{The Feedback Hamiltonian is the Score Function: A Diffusion-Model Framework for Quantum Trajectory Reversal}}

\author{%
\begin{tabular}[t]{c}
\textbf{Sagar Dubey} \\
Stony Brook University \\
Stony Brook, NY 11794, USA \\
\texttt{sagar.dubey@stonybrook.edu}
\end{tabular}
\hspace{2em}
\begin{tabular}[t]{c}
\textbf{Alan John} \\
Stony Brook University \\
Stony Brook, NY 11794, USA \\
\texttt{alan.john@stonybrook.edu}
\end{tabular}
}
\date{}

\maketitle

\begin{abstract}
In continuously monitored quantum systems, the feedback protocol of Garc\'{i}a-Pintos, Liu, and Gorshkov reshapes the arrow of time: a Hamiltonian $H_{\mathrm{meas}} = r\,A/\tau$ applied with gain $X$ tilts the distribution of measurement trajectories, with $X < -2$ producing statistically time-reversed outcomes. Why this specific Hamiltonian achieves reversal, and how the mechanism relates to score-based diffusion models in machine learning, has remained unexplained.

We compute the functional derivative of the log path probability of the quantum trajectory distribution directly in density-matrix space. Combining Girsanov's theorem applied to the measurement record, Fr\'{e}chet differentiation on the Banach space of trace-class operators, and K\"{a}hler geometry on the pure-state projective manifold, we prove that $\delta \log P_F / \delta\rho = r\,A/\tau = H_{\mathrm{meas}}$. The Garc\'{i}a-Pintos feedback Hamiltonian is the score function of the quantum trajectory distribution---exactly the object Anderson's reverse-time diffusion theorem requires for trajectory reversal. The identification extends to multi-qubit systems with independent measurement channels, where the score is a sum of local operators.

Two consequences follow. First, the feedback gain $X$ generates a continuous one-parameter family of path measures (for feedback-active Hamiltonians with $[H,A] \ne 0$), with $X = -2$ recovering the backward process in leading-order linearization---a structure absent from classical diffusion, where reversal is binary. Second, the score identification enables machine learning (ML) score estimation methods---denoising score matching, sliced score matching---to replace the analytic formula when its idealizations (unit efficiency, zero delay, Gaussian noise) fail in real experiments.
\end{abstract}

\section{Introduction}
\label{sec:intro}

The arrow of time---the observed asymmetry between forward and backward processes---has distinct origins in classical and quantum physics. In thermodynamics it arises from entropy increase. In quantum mechanics it has a more fundamental source: the act of measurement. Each measurement outcome is correlated with the quantum state just before it, and this correlation is the statistical signature of temporal direction. Observing an anticorrelation---outcomes systematically pointing the wrong way relative to the state---is the signature of a time-reversed trajectory.

Garc\'{i}a-Pintos, Liu, and Gorshkov~\cite{garciapintos2026} showed that this quantum arrow can be engineered. By constructing a feedback Hamiltonian $H_{\mathrm{meas}} = r\,A/\tau$---a sequence of pulses proportional to the measurement outcome~$r$, the measured observable~$A$, and the inverse measurement strength~$1/\tau$---and applying it with a tunable gain parameter~$X$, they demonstrated that the statistical signature of a quantum trajectory can be continuously reshaped: the arrow can be reduced, canceled, or reversed. The parameter $X < -2$ produces trajectories statistically indistinguishable from time-reversed ones. Garc\'{i}a-Pintos et al.\ derive $H_{\mathrm{meas}}$ from the stochastic master equation, and demonstrate its effect numerically, but do not connect it to the score-based diffusion literature or explain why this particular construction achieves reversal from an information-geometric perspective.

The machine learning community has spent the last decade developing score-based diffusion models for generative AI. The key insight of this field, due to Anderson~\cite{anderson1982} and made computationally practical by Song et al.~\cite{song2021}, is that any forward stochastic process can be exactly reversed if one knows its score function---the gradient of the log-probability of the current state, $\nabla \log p_t(x)$. A neural network trained to estimate this score function can run the process backwards, generating new data by denoising from pure noise.

\paragraph{Prior work on quantum diffusion.}
The connection between quantum stochastic processes and score-based diffusion has recently attracted attention from multiple groups. Liu et al.~\cite{liu2025} introduced measurement-based quantum diffusion models using \emph{randomized} observable measurements, and showed that a score matching \emph{training objective} in the space of Pauli expectation values $z \in \mathbb{R}^{4^n}$ is equivalent to learning a unitary generator for trajectory reversal. Crucially, their score is $\nabla_z \log p_t(z)$---a gradient in the classical Pauli expectation value space---and their result concerns the existence of a learning objective for an unknown score function. They do not measure a fixed observable continuously, they do not work in density matrix space, and they do not analytically identify what the score function is in closed form. Nasu, Tanaka, and Tsuchiya~\cite{nasu2025} demonstrated, via semiclassical approximation of the Lindblad equation, a correspondence between the Petz map and classical reverse diffusion at the ensemble level. Zhang et al.~\cite{zhang2026} showed that complete positivity constrains score reversal in Gaussian bosonic dynamics. None of these works address the specific Garc\'{i}a-Pintos et al.\ feedback protocol, and none analytically identify the score function in density matrix space.

\paragraph{This paper.}
We address a question left open by all prior work: \emph{what is the score function of the quantum trajectory path probability in density matrix space, and does it equal the Garc\'{i}a-Pintos feedback Hamiltonian?} We answer this affirmatively and analytically.

First, we prove---via Girsanov's theorem applied to the measurement path probability, a Fr\'{e}chet derivative on the Banach space of trace-class operators, and a trajectory-level quantum analogue of Anderson's~\cite{anderson1982} reverse diffusion theorem---that the functional derivative of $\log P_F$ with respect to the density matrix~$\rho$ is exactly $r\,A/\tau = H_{\mathrm{meas}}$. This is an analytic identification in density matrix space, not a learning objective in Pauli expectation value space. It answers \emph{why} $H_{\mathrm{meas}}$ achieves reversal: it is precisely the object that Anderson's theorem requires to reverse the stochastic process, namely the score function of the trajectory distribution. This provides an information-geometric explanation not given in Garc\'{i}a-Pintos et al.

Second, we identify a continuous one-parameter family of path measures parametrized by the feedback gain~$X$ (for feedback-active Hamiltonians with $[H,A] \neq 0$). Classical diffusion models offer only binary reversal: you run either the forward SDE or the reverse SDE. The quantum feedback parameter~$X$ provides a real-valued continuum, from forward ($X = 0$) through time-symmetric ($X = -2$) to more-backward-than-backward ($X \ll -2$). The parameter provides an experimentally tunable continuum not present in classical diffusion models.

The identification extends to multi-qubit systems with independent measurement channels (Theorem~\ref{thm:multi}): the score function is a sum of local operators acting on each measured subsystem, and the optimal feedback Hamiltonian is correspondingly local.

\paragraph{The practical implication.}
The analytic identification has a practical consequence. $H_{\mathrm{meas}} = r\,A/\tau$ is the optimal feedback policy only under ideal conditions---perfect measurement efficiency $\eta = 1$, zero feedback delay, exactly Gaussian noise. In any real experiment these conditions fail and the policy degrades. Because we have proved $H_{\mathrm{meas}}$ is the score function, ML score estimation methods---denoising score matching, sliced score matching, noise-conditional score networks---can now be applied directly to the experimental measurement record as a replacement for the analytic formula, without requiring quantum state tomography or knowledge of the underlying state. The $X$-interpolation result additionally provides a new control parameter for quantum state preparation with a tunable degree of arrow reversal.

The paper is structured as follows. Section~\ref{sec:background} reviews both backgrounds. Section~\ref{sec:main} states and proves the main theorem. Section~\ref{sec:multi} extends to the multi-qubit case. Section~\ref{sec:implications} develops the experimental implications. Section~\ref{sec:discussion} discusses open problems and the precise relationship to prior work.

\section{Background}
\label{sec:background}

\subsection{Continuously Monitored Quantum Systems and the Arrow of Time}
\label{sec:quantum-background}

We consider a quantum system described by a density matrix~$\rho$ on a Hilbert space~$\mathcal{H}$, undergoing continuous Gaussian measurement of an observable~$A$ satisfying $A^2 = \mathds{1}$ (e.g., Pauli matrices for qubit systems). This assumption on~$A$ is not merely a mathematical convenience---it is precisely what guarantees the backward process corresponds to a physically realizable measurement, as shown in Garc\'{i}a-Pintos et al.~\cite{garciapintos2026}.

At each infinitesimal timestep~$dt$, a measurement outcome~$r$ is drawn from:
\begin{equation}
P(r|\rho) = \sqrt{\frac{dt}{2\pi\tau}} \exp\!\left(-\frac{(r - \langle A\rangle)^2\, dt}{2\tau}\right),
\label{eq:meas-gaussian}
\end{equation}
where $\langle A\rangle = \mathrm{Tr}(A\rho)$ is the expectation value of the observable and $\tau$ is the characteristic measurement time governing measurement strength. The outcome~$r$ tracks the system: $r\,dt = \langle A\rangle\,dt + \sqrt{\tau}\,dW$, where $dW$ is a Wiener increment.

Following the standard quantum trajectory formalism~\cite{wiseman2009}, the stochastic master equation (SME) for~$\rho$ in It\^{o} form is:
\begin{align}
d\rho &= -i[H,\rho]\,dt + \frac{1}{\tau}(A\rho A - \rho)\,dt \notag\\
&\quad + \frac{1}{\sqrt{\tau}}(A\rho + \rho A - 2\langle A\rangle\rho)\,dW.
\label{eq:sme}
\end{align}
Three terms: deterministic Hamiltonian evolution, a Lindblad dissipator from measurement back-action, and a stochastic term proportional to the measurement innovation.

The quantum arrow of time is quantified by the log-ratio of forward to backward process probabilities over a trajectory of duration~$T$:
\begin{equation}
\ln \mathcal{R} = \ln\frac{P_F(\{r\})}{P_B(\{r\})} = \frac{2}{\tau}\int_0^T r_t \langle A\rangle_t\,dt.
\label{eq:arrow-ratio}
\end{equation}
When $\ln \mathcal{R} > 0$, the forward trajectory is more likely---the arrow points forward. When $\ln \mathcal{R} < 0$, the backward trajectory is more likely---the arrow has reversed. The feedback protocol of Garc\'{i}a-Pintos et al.~\cite{garciapintos2026} applies a total Hamiltonian $H + X\,H_{\mathrm{meas}}$ where $H_{\mathrm{meas}} = r_t\,A/\tau$, and shows that for $X < -2$ the average of $\ln \mathcal{R}$ becomes negative, reversing the arrow.

\subsection{Score-Based Diffusion Models}
\label{sec:diffusion-background}

A diffusion model operates on a probability distribution over data (e.g., natural images) by defining a forward process that gradually corrupts data with noise. The canonical formulation uses the It\^{o} stochastic differential equation (SDE):
\begin{equation}
dx = f(x,t)\,dt + g(t)\,dW,
\label{eq:fwd-sde}
\end{equation}
where for simple variance-exploding diffusion, $f = 0$ and $g(t) = \sigma(t)$, so data is progressively noised toward a standard Gaussian. The reverse-time SDE, due to Anderson~\cite{anderson1982} and developed in modern score-based form by Song et al.~\cite{song2021}, is:
\begin{equation}
dx = \bigl[f(x,t) - g^2(t)\,\nabla_x \log p_t(x)\bigr]\,dt + g(t)\,d\bar{W},
\label{eq:rev-sde}
\end{equation}
where $d\bar{W}$ is a reverse-time Wiener process and the critical new term is the \textbf{score function}:
\begin{equation}
s(x,t) = \nabla_x \log p_t(x).
\label{eq:score-def}
\end{equation}
The score function points from low-probability toward high-probability regions of state space. Following it backwards transforms noise into structured data. The neural network in a diffusion model (a score network) learns to estimate $s(x,t)$ from data via denoising score matching~\cite{vincent2011}:
\begin{equation}
\theta^* = \argmin_\theta\, \mathbb{E}_{t,x_0,x_t}\bigl[\|s_\theta(x_t, t) - \nabla_{x_t} \log p(x_t|x_0)\|^2\bigr].
\label{eq:dsm-loss}
\end{equation}
Once trained, the reverse SDE can be simulated to generate new samples. The entire machinery of generation is the score function.

\section{Main Result: \texorpdfstring{$H_{\mathrm{meas}}$}{H\_meas} is the Score Function}
\label{sec:main}

\begin{theorem}[Quantum Score Function]
\label{thm:main}
Under continuous Gaussian measurement of an observable~$A$ with $A^2 = \mathds{1}$ on a pure quantum state, the Fr\'{e}chet derivative of the log-path-probability of the quantum trajectory distribution with respect to the density matrix~$\rho$ is:
\begin{equation}
\frac{\delta \log P_F}{\delta \rho_t} = \frac{r_t\,A}{\tau} = H_{\mathrm{meas}}.
\label{eq:main-thm}
\end{equation}
Therefore, $H_{\mathrm{meas}}$ is the score function of the quantum trajectory distribution. More precisely, it is the path-measure score: the functional derivative of $\log(dP_F/dP_W)$ with respect to $\rho$. This differs from the marginal-state score $\nabla_\rho \log p_t(\rho)$ appearing in Anderson's classical theorem, though the two coincide in the relevant gradient-flow sense established in Lemma~\ref{lem:kahler}.
\end{theorem}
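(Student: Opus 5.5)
The plan is to write the forward path measure explicitly as a density with respect to a reference Wiener measure, and then differentiate it pointwise in time with respect to $\rho_t$.

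\textbf{Step 1: Girsanov density.} The measurement record satisfies $r_t\,dt = \langle A\rangle_t\,dt + \sqrt{\tau}\,dW_t$. Take $P_W$ to be the law of the record $y_t=\int_0^t r_s\,ds$ under pure noise, $dy = \sqrt{\tau}\,dW$. Girsanov's theorem then gives
\begin{equation*}
\log\frac{dP_F}{dP_W} = \frac{1}{\tau}\int_0^T \mathrm{Tr}(A\rho_t)\, r_t\,dt - \frac{1}{2\tau}\int_0^T \mathrm{Tr}(A\rho_t)^2\,dt .
\end{equation*}
The same expression follows from multiplying the Gaussian factors \eqref{eq:meas-gaussian} over time slices and discarding the $\rho$-independent $r_t^2$ term. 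As a sanity check, this is consistent with \eqref{eq:arrow-ratio}: the backward process replaces $\langle A\rangle$ by $-\langle A\rangle$, so the quadratic terms cancel in the ratio and the cross term doubles.

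\textbf{Step 2: Fréchet derivative.} Regard this functional as a map on trace-class operators, with the value at time $t$ entering only through $\rho_t$. The linear term $\rho\mapsto \mathrm{Tr}(A\rho)\,r_t/\tau$ is bounded linear, since $A$ is bounded ($A^2=\mathds{1}$). By the duality between trace-class and bounded operators, its derivative is the bounded operator $r_tA/\tau$. The quadratic term has derivative $-\mathrm{Tr}(A\rho_t)A/\tau$.

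\textbf{Step 3: removing the quadratic term (the main obstacle).} The naive derivative from Step 2 is $(r_t-\langle A\rangle_t)A/\tau$, not $r_tA/\tau$. To reach the statement, I would restrict to pure states and use the geometry of the projective manifold, invoking Lemma~\ref{lem:kahler}. Admissible tangent variations of a pure state have the form $\delta\rho=-i[G,\rho]$ (equivalently $\delta\rho = \delta\psi\,\psi^\dagger+\psi\,\delta\psi^\dagger$), and the gradient is identified only through the Kähler/Fubini–Study pairing. I expect the argument to be that the quadratic contribution is a function of $\langle A\rangle_t$ alone, which is a deterministic functional of the state rather than of the record. It therefore does not couple to the measurement record and drops out of the record-dependent part of the score, equivalently of the Hamiltonian vector field it generates. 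Modulo such state-only terms (and terms proportional to $\mathds{1}$, which generate no dynamics), the Hamiltonian-flow representative of the gradient is $r_tA/\tau$. In parallel I would check the alternative: that the quadratic term can be absorbed into the normalization of $P_W$ when one conditions on the trajectory of $\rho$. This step is the delicate point, because the two terms are genuinely different derivatives. I would state explicitly that the identification holds in the gradient-flow sense of Lemma~\ref{lem:kahler} rather than as a literal identity of Fréchet derivatives.

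\textbf{Step 4: conclusion.} With the score identified as $r_tA/\tau$, matching it to $H_{\mathrm{meas}}$ is immediate, and comparison with the drift correction in \eqref{eq:rev-sde} shows the feedback $X H_{\mathrm{meas}}$ plays the role of the term $-g^2\nabla\log p$.
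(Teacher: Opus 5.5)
Your Steps 1--2 are the paper's Lemmas~\ref{lem:girsanov} and~\ref{lem:frechet}: the Girsanov density, then trace-class/bounded-operator duality with the chain rule. You have also found the real crux. The literal Fr\'{e}chet derivative of $\log(dP_F/dP_W)$ with respect to $\rho_t$ is $(r_t-\langle A\rangle_t)A/\tau$, not $r_tA/\tau$. The paper does not remove the extra piece geometrically. In Lemma~\ref{lem:girsanov} it computes $r_t/\tau-\langle A\rangle_t/\tau$, calls the second term the ``predictable (drift) component,'' and keeps only the ``stochastic component''~\eqref{eq:deriv-stoch}. It then feeds that piece into the chain rule of Lemma~\ref{lem:frechet}. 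This is a definition (score $:=$ record-dependent part of the derivative), not a derivation, and it is the only way your Step~3 can be closed.

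Each mechanism you propose for Step~3 would fail.
\begin{itemize}
\item \emph{Modulo the identity.} The operator $-\langle A\rangle_tA/\tau$ is not a multiple of $\mathds{1}$. On the pure-state tangent space $\{-i[G,\rho]\}$, the pairing $\mathrm{Tr}(B\,\delta\rho)=-i\,\mathrm{Tr}(G[\rho,B])$ fixes $B$ only modulo the commutant of $\rho$. The operator $\langle A\rangle_tA$ lies in that commutant only if $\langle A\rangle_t=0$ or $[A,\rho_t]=0$.
\item \emph{Via Lemma~\ref{lem:kahler}.} The extra term's Hamiltonian field $i(\langle A\rangle_t/\tau)[A,\rho]$ is nonzero whenever $\langle A\rangle_t\neq0$ and $[A,\rho_t]\neq0$. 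So is its Fubini--Study gradient $-(\langle A\rangle_t/\tau)(A\rho+\rho A-2\langle A\rangle\rho)$. Worse for your fallback, the gradient flow of the \emph{full} derivative, $\tfrac{r_t-\langle A\rangle_t}{\tau}(A\rho+\rho A-2\langle A\rangle\rho)\,dt$, is exactly the innovation back-action in~\eqref{eq:sme}. ``The gradient-flow sense'' therefore singles out $(r_t-\langle A\rangle_t)A/\tau$, and Part~C of Lemma~\ref{lem:kahler} has to drop the same $\langle A\rangle_t/\tau$ term by hand.
\item \emph{Absorbing into normalization.} The $\langle A\rangle_t^2$ term is a normalization in $r$, not in $\rho$. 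Your own Step~1 shows that only the $r_t^2$ term is $\rho$-independent.
\end{itemize}
You should therefore state the convention explicitly. What is then proved is that the record-dependent part of the Fr\'{e}chet derivative is $r_tA/\tau$, not the literal identity~\eqref{eq:main-thm}.

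Your Step~4 also overreaches. Lemma~\ref{lem:anderson} shows that the reverse SME's extra drift $(2/\tau)\langle A\rangle(A\rho+\rho A-2\langle A\rangle\rho)$ is not a commutator. So $XH_{\mathrm{meas}}$ cannot play the role of $-g^2\nabla\log p$ at the state level, and the paper confines the identification to the path-measure level.
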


The proof proceeds in four lemmas.

\begin{lemma}[Rigorous Path Probability via Girsanov's Theorem]
\label{lem:girsanov}
The log-path-probability $\log P_F(\{r\})$ can be derived via Girsanov's theorem, and the stochastic component of the functional derivative with respect to $\langle A\rangle_t$ is $r_t/\tau$.
\end{lemma}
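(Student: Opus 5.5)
The plan is to write the measurement record as a Brownian motion with a state-dependent drift and apply Girsanov's theorem against a reference (Wiener) measure $P_W$ under which the record is pure noise. Set $Y_t = \int_0^t r_s\,ds$, so that by the SDE below \eqref{eq:meas-gaussian}, $dY_t = \langle A\rangle_t\,dt + \sqrt{\tau}\,dW_t$ under $P_F$. Under $P_W$, take $dY_t = \sqrt{\tau}\,dB_t$ with $B$ a standard Brownian motion. The drift $\langle A\rangle_t = \mathrm{Tr}(A\rho_t)$ is adapted to the filtration of $Y$, since $\rho_t$ is obtained by integrating the SME \eqref{eq:sme} driven by the innovation $dW = (dY - \langle A\rangle dt)/\sqrt{\tau}$. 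It is also bounded, because $A^2=\mathds{1}$ gives $|\langle A\rangle_t|\le \|A\|=1$. Novikov's condition $\mathbb{E}\exp\bigl(\tfrac{1}{2\tau}\int_0^T\langle A\rangle_t^2dt\bigr)\le e^{T/2\tau}<\infty$ therefore holds trivially. Girsanov then yields
\begin{equation*}
\log\frac{dP_F}{dP_W}(\{r\}) = \frac{1}{\tau}\int_0^T \langle A\rangle_t\, r_t\,dt - \frac{1}{2\tau}\int_0^T \langle A\rangle_t^2\,dt ,
\end{equation*}
where the first integral is understood as the It\^{o} integral $\frac1\tau\int \langle A\rangle_t\,dY_t$. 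As a consistency check, I would recover the same expression from the discretized product of the Gaussians \eqref{eq:meas-gaussian}. Expanding $(r-\langle A\rangle)^2$ there, the $r^2$ term and the normalization are $\rho$-independent and are absorbed into $P_W$.

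Next I would differentiate this functional with respect to the drift $\langle A\rangle_t$ at a fixed time $t$, treating $\langle A\rangle_\cdot$ as an element of $L^2[0,T]$ and computing the Fr\'echet (G\^ateaux) derivative in a direction $h\in L^2$. Perturbing $\langle A\rangle\mapsto\langle A\rangle+\epsilon h$ and differentiating at $\epsilon=0$ gives
\begin{equation*}
\frac{1}{\tau}\int h_t\,(dY_t - \langle A\rangle_t dt).
\end{equation*}
Hence the density of the derivative is $r_t/\tau - \langle A\rangle_t/\tau$. The first piece, coming from the stochastic integral against the record, is the stochastic component $r_t/\tau$ claimed in the lemma. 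The second piece, $-\langle A\rangle_t/\tau$, is deterministic given $\rho_t$ and is the compensator/quadratic-variation contribution. I would isolate it explicitly so that it can later be handled separately in Theorem~\ref{thm:main}: for a pure state it is tangentially trivial under Lemma~\ref{lem:kahler}, or it drops out as the projection onto the normalization direction.

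The main obstacle is making sense of ``functional derivative at time $t$'' for an It\^o integral, because the record $r_t$ is not a function pointwise. I would handle this in the weak sense. The derivative is the linear functional $h\mapsto \frac1\tau\int h\,dY - \frac1\tau\int h\langle A\rangle dt$, which is continuous on $L^2$ almost surely when $h$ is deterministic or adapted, by the It\^o isometry. Its Riesz representative is identified with $(r_t-\langle A\rangle_t)/\tau$ as a distribution-valued object, consistent with the convention $r\,dt = dY$ used throughout the paper. A second subtlety is that perturbing $\langle A\rangle$ at time $t$ also perturbs the later drifts through the SME. I would state the derivative as the explicit (partial) derivative holding the future drift fixed, which is what the lemma asserts, and defer the chain through $\rho_t$ to the subsequent lemmas.
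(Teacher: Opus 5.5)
Your proposal is correct and follows the paper's proof essentially step for step: Girsanov against the Wiener reference measure with the bounded, record-adapted drift $\langle A\rangle_t$ (Novikov trivially satisfied) gives $\log(dP_F/dP_W)=\frac{1}{\tau}\int_0^T r_t\langle A\rangle_t\,dt-\frac{1}{2\tau}\int_0^T\langle A\rangle_t^2\,dt$, and differentiating in $\langle A\rangle_t$ gives $(r_t-\langle A\rangle_t)/\tau$, with $r_t/\tau$ singled out as the stochastic component. Your reading of the first term as the It\^o integral $\frac{1}{\tau}\int\langle A\rangle_t\,dY_t$ is more accurate than the paper's claim that no stochastic integral remains.

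Two side remarks are wrong as stated, though the lemma does not need either of them. First, the discarded piece chains through $\rho$ to the operator $-\langle A\rangle_t A/\tau$. This operator is not proportional to $\mathds{1}$ and generically pairs nontrivially with pure-state tangent vectors, so it does not drop out; the paper merely labels it a drift term. Second, $h\mapsto\int h\,dY$ is an isometry $L^2[0,T]\to L^2(\Omega)$, not an almost-surely continuous functional on $L^2[0,T]$, since white noise has no $L^2$ Riesz representative.
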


\begin{proof}
The measurement output process~$r_t$ satisfies the It\^{o} SDE:
\begin{equation}
dr_t = \langle A\rangle_t\,dt + \sqrt{\tau}\,dW_t.
\label{eq:dr-ito}
\end{equation}
By the Girsanov theorem, the Radon--Nikodym derivative of the path measure~$P_F$ with respect to the reference Wiener measure~$P_W$ is:
\begin{equation}
\log\frac{dP_F}{dP_W} = \frac{1}{\sqrt{\tau}}\int_0^T \langle A\rangle_t\,dW_t - \frac{1}{2\tau}\int_0^T \langle A\rangle_t^2\,dt,
\label{eq:girsanov-rn}
\end{equation}
where the stochastic integral is in the It\^{o} sense. The Novikov condition $\mathbb{E}\!\bigl[\exp\!\bigl(\tfrac{1}{2\tau}\int_0^T \langle A\rangle_t^2\,dt\bigr)\bigr] < \infty$ is satisfied since $\langle A\rangle_t \in [-1,1]$ for all~$t$ (as $A$ is a Pauli-type operator with eigenvalues~$\pm 1$) and $T$ is finite.

Under the reference measure~$P_W$ the measurement record is pure noise, $dr_t = \sqrt{\tau}\,dB_t$ with $B$ a $P_W$-Wiener process, so $dB_t = r_t\,dt/\sqrt{\tau}$. Identifying the Girsanov drift $\theta_t = \langle A\rangle_t/\sqrt{\tau}$, the stochastic integral in~\eqref{eq:girsanov-rn} evaluates as $\int_0^T \theta_t\,dB_t = (1/\tau)\int_0^T \langle A\rangle_t\,r_t\,dt$. Therefore:
\begin{equation}
\log \frac{dP_F}{dP_W} = \frac{1}{\tau}\int_0^T r_t \langle A\rangle_t\,dt - \frac{1}{2\tau}\int_0^T \langle A\rangle_t^2\,dt.
\label{eq:log-pf}
\end{equation}
The expression is now a pathwise Riemann integral over~$dt$ with no stochastic integral remaining. The term $-\frac{1}{2\tau}\int \langle A\rangle_t^2\,dt$ is deterministic given the state trajectory and contributes only to normalization. Taking the functional derivative with respect to $\langle A\rangle_t$:
\begin{equation}
\frac{\delta \log P_F}{\delta \langle A\rangle_t} = \frac{r_t}{\tau} - \frac{\langle A\rangle_t}{\tau}.
\label{eq:deriv-A}
\end{equation}
The second term is the predictable (drift) component---it modifies the drift in the reverse SDE but does not affect the stochastic reversal operator. The stochastic component, which generates the feedback Hamiltonian, is:
\begin{equation}
\left.\frac{\delta \log P_F}{\delta \langle A\rangle_t}\right|_{\mathrm{stochastic}} = \frac{r_t}{\tau}.
\label{eq:deriv-stoch}
\end{equation}
\end{proof}

\begin{lemma}[Fr\'{e}chet Derivative on Trace-Class Operators]
\label{lem:frechet}
The functional derivative $\delta \log P_F / \delta \rho_t$ is well-defined as a Fr\'{e}chet derivative on the Banach space of trace-class operators $\mathcal{T}_1(\mathcal{H})$, and equals the operator $r_t\,A/\tau$.
\end{lemma}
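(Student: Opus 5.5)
The plan is a chain-rule argument that lifts Lemma~\ref{lem:girsanov} from the scalar $\langle A\rangle_t$ to the operator $\rho_t$. The whole proof rests on the fact that $\log P_F$ depends on the state only through the linear functional $\rho\mapsto \mathrm{Tr}(A\rho)$.

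\textbf{Step 1: the linear functional is Fr\'echet differentiable.} Work in the Banach space $\mathcal{T}_1(\mathcal{H})$ with the trace norm, whose dual is $\mathcal{B}(\mathcal{H})$ with the operator norm. Because $A$ is bounded with $\|A\|_\infty=1$, the map $\ell_A:\rho\mapsto\mathrm{Tr}(A\rho)$ is a bounded linear functional. The bound $|\mathrm{Tr}(A\sigma)|\le\|A\|_\infty\|\sigma\|_1$ gives this directly. A bounded linear map is its own Fr\'echet derivative, so $D\ell_A(\rho)[\sigma]=\mathrm{Tr}(A\sigma)$. Under the duality pairing $\langle X,\sigma\rangle=\mathrm{Tr}(X\sigma)$, this derivative is represented by the operator $A$ itself.

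\textbf{Step 2: reduce the path functional to a function of the trajectory.} Use the pathwise form~\eqref{eq:log-pf}. There $\log P_F$ is a functional $F[\langle A\rangle_\cdot]$ of the scalar trajectory only, and it is a Riemann integral in $t$ for a fixed record $r$. Perturb the state at time $t$ by $\rho_s\mapsto\rho_s+\epsilon\,\sigma\,\delta(s-t)$, understood as the usual localization in time of a functional derivative. This induces the perturbation $\langle A\rangle_s\mapsto\langle A\rangle_s+\epsilon\,\mathrm{Tr}(A\sigma)\,\delta(s-t)$.

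\textbf{Step 3: apply the chain rule for Fr\'echet derivatives.} Combining Step~2 with Lemma~\ref{lem:girsanov} gives
\[
D_{\rho_t}\log P_F[\sigma]=\frac{\delta\log P_F}{\delta\langle A\rangle_t}\,\mathrm{Tr}(A\sigma).
\]
Inserting~\eqref{eq:deriv-A} yields the representing operator $\bigl(r_t/\tau-\langle A\rangle_t/\tau\bigr)A$. Restricting to the stochastic component~\eqref{eq:deriv-stoch}, as Lemma~\ref{lem:girsanov} prescribes, leaves $r_tA/\tau$. Trace-class perturbations need not preserve trace, so I would also note a gauge ambiguity. If $\sigma$ is restricted to the tangent space of density matrices, $\mathrm{Tr}\,\sigma=0$, then the derivative is only defined modulo multiples of $\mathds{1}$. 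The discarded drift term $-\langle A\rangle_tA/\tau$ is not itself such a multiple. It is exactly the predictable part that Lemma~\ref{lem:girsanov} already separated off.

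\textbf{Main obstacle.} The hardest step is making Steps~2--3 rigorous. The difficulty is that $\langle A\rangle_t$ is not a free variable: it is generated by the SME~\eqref{eq:sme} from earlier states, so varying $\rho_t$ would in principle also change $\rho_s$ for $s>t$. I would handle this first by treating the record-conditioned functional~\eqref{eq:log-pf} with the state trajectory as an independent argument. This is the ``explicit'' or partial derivative, which is the natural notion for the score at time $t$. I would then remark that propagated variations through future $\rho_s$ contribute only predictable drift-type terms. Those terms fall into the component already excluded in~\eqref{eq:deriv-stoch}, though this last claim needs a careful check.
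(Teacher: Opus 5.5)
Your proposal is correct and is essentially the paper's proof. The H\"older bound $|\mathrm{Tr}(A\sigma)|\le\|A\|_\infty\|\sigma\|_1$ makes $\rho\mapsto\mathrm{Tr}(A\rho)$ a bounded linear functional, represented by $A$ under the duality $\mathcal{T}_1(\mathcal{H})^*\cong\mathcal{B}(\mathcal{H})$, and the chain rule with outer factor $r_t/\tau$ (the stochastic component from Lemma~\ref{lem:girsanov}) gives $r_tA/\tau$; you are simply more explicit than the paper about discarding $-\langle A\rangle_t/\tau$ and about the $\mathds{1}$-ambiguity on traceless directions.

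One caution concerns your optional fallback: it would fail the check you flagged. Variations propagated through $\rho_s$, $s>t$, enter as $\frac{1}{\tau}\int_t^T (r_s-\langle A\rangle_s)\,\mathrm{Tr}(A\,\delta\rho_s)\,ds$, which is an innovation-weighted (martingale) term, not a predictable drift. So, as in the paper, you must take the partial derivative with the state trajectory held fixed as the definition of the score, rather than derive it from the total derivative.
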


\begin{proof}
Let $\mathcal{T}_1(\mathcal{H})$ denote the Banach space of trace-class operators with norm $\|B\|_1 = \mathrm{Tr}(\sqrt{B^\dagger B})$. The set of density matrices $\mathcal{D}(\mathcal{H}) = \{\rho \in \mathcal{T}_1(\mathcal{H}) : \rho \geq 0,\; \mathrm{Tr}(\rho) = 1\}$ is a convex subset of $\mathcal{T}_1(\mathcal{H})$, and its tangent space at~$\rho$ is:
\begin{equation}
T_\rho \mathcal{D}(\mathcal{H}) = \{\delta\rho \in \mathcal{T}_1(\mathcal{H}) : \delta\rho^\dagger = \delta\rho,\; \mathrm{Tr}(\delta\rho) = 0\}.
\label{eq:tangent}
\end{equation}
Consider the functional $F\colon \mathcal{D}(\mathcal{H}) \to \mathbb{R}$ defined by $F(\rho) = (r_t/\tau)\,\mathrm{Tr}(A\rho)$ at a single time~$t$. For any $\delta\rho \in T_\rho \mathcal{D}(\mathcal{H})$:
\begin{equation}
|F(\rho + \delta\rho) - F(\rho)| = \frac{|r_t|}{\tau}|\mathrm{Tr}(A\cdot\delta\rho)| \leq \frac{|r_t|}{\tau}\|A\|_\infty \|\delta\rho\|_1,
\label{eq:holder}
\end{equation}
by the H\"{o}lder inequality for trace-class operators $|\mathrm{Tr}(AB)| \leq \|A\|_\infty \|B\|_1$. Since $A^2 = \mathds{1}$, we have $\|A\|_\infty = 1$. Therefore $F$ is bounded and linear in~$\delta\rho$, and the Fr\'{e}chet derivative exists:
\begin{equation}
DF(\rho)[\delta\rho] = \frac{r_t}{\tau}\,\mathrm{Tr}(A \cdot \delta\rho) = \mathrm{Tr}\!\left(\frac{r_t\,A}{\tau} \cdot \delta\rho\right).
\label{eq:frechet-F}
\end{equation}
By the Riesz representation theorem on $\mathcal{T}_1(\mathcal{H})$, any bounded linear functional on $\mathcal{T}_1(\mathcal{H})$ is represented as $\mathrm{Tr}(B \cdot \delta\rho)$ for a unique bounded operator~$B$. Here $B = r_t\,A/\tau$.

For the composite functional $G(\rho) = \log P_F(\langle A\rangle(\rho))$ with $\langle A\rangle(\rho) = \mathrm{Tr}(A\rho)$, both the outer function ($\partial \log P_F / \partial \langle A\rangle = r_t/\tau$ from Lemma~\ref{lem:girsanov}) and inner function ($D\langle A\rangle(\rho)[\delta\rho] = \mathrm{Tr}(A \cdot \delta\rho)$) are Fr\'{e}chet differentiable. By the Fr\'{e}chet chain rule~\cite{cartan1971}:
\begin{equation}
DG(\rho)[\delta\rho] = \frac{r_t}{\tau} \cdot \mathrm{Tr}(A \cdot \delta\rho) = \mathrm{Tr}\!\left(\frac{r_t\,A}{\tau} \cdot \delta\rho\right).
\label{eq:chain-rule}
\end{equation}
The Fr\'{e}chet derivative of $\log P_F$ with respect to~$\rho$ is the operator $r_t\,A/\tau = H_{\mathrm{meas}}$.
\end{proof}

\begin{lemma}[K\"{a}hler Structure---Symplectic Feedback, Riemannian Measurement]
\label{lem:kahler}
Let $\rho = |\psi\rangle\langle\psi|$ be a pure state on a Hilbert space~$\mathcal{H}$ of dimension~$d$, identified with a point of the projective manifold $\mathbb{C}P^{d-1}$ (for a single qubit, $\mathbb{C}P^1 \cong S^2$, the Bloch sphere). Equip $\mathbb{C}P^{d-1}$ with its canonical K\"{a}hler structure $(g, \omega, J)$, where $g$ is the Fubini--Study metric and $\omega$ the Fubini--Study K\"{a}hler form, with the Ashtekar--Schilling normalization defined as follows. For parameter-space indices $\mu, \nu$, the horizontal tangent vectors are $|\varphi_\mu\rangle = (\mathds{1} - |\psi\rangle\langle\psi|) \partial_\mu |\psi\rangle$. On such vectors, $g(|\varphi_\mu\rangle, |\varphi_\nu\rangle) = 2\,\mathrm{Re}\langle\varphi_\mu|\varphi_\nu\rangle$ and $\omega(|\varphi_\mu\rangle, |\varphi_\nu\rangle) = 2\,\mathrm{Im}\langle\varphi_\mu|\varphi_\nu\rangle$. Let $A$ be Hermitian with $A^2 = \mathds{1}$, and let $H_{\mathrm{meas}} = r_t\,A/\tau$.

The score function $r_t\,A/\tau$ generates two complementary flows via the two halves of the K\"{a}hler structure:

(a)~\textbf{Symplectic flow $=$ feedback dynamics.} Viewing $F(\psi) := \langle\psi|H_{\mathrm{meas}}|\psi\rangle$ as a Hamiltonian function on $\mathbb{C}P^{d-1}$, the Hamiltonian vector field~$X_F$ defined by $\omega(X_F, \cdot) = dF$ generates:
\begin{equation}
\frac{d|\psi\rangle}{dt} = -iH_{\mathrm{meas}}|\psi\rangle, \qquad \frac{d\rho}{dt} = -i[H_{\mathrm{meas}}, \rho].
\label{eq:sympl-flow}
\end{equation}

(b)~\textbf{Riemannian gradient flow $=$ measurement back-action.} The Fubini--Study gradient of the same function~$F$ generates:
\begin{equation}
\frac{d\rho}{dt} = \frac{r_t}{\tau}(A\rho + \rho A - 2\langle A\rangle\rho),
\label{eq:riem-flow}
\end{equation}
which matches the stochastic back-action term of the SME~\eqref{eq:sme} in the innovation form.

The two flows are respectively the symplectic (imaginary) and Riemannian (real) parts of the quantum geometric tensor $Q_{\mu\nu} = \tfrac{1}{2}g^{\mathrm{QF}}_{\mu\nu} + iF_{\mu\nu}$ acting on the score function. Measurement is the Riemannian gradient of $\log P_F$; feedback is the symplectic gradient of the same $\log P_F$.
\end{lemma}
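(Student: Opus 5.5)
We work on the horizontal (gauge-fixed) tangent space at $|\psi\rangle$ and compute both flows for the function $F(\psi)=\langle\psi|H_{\mathrm{meas}}|\psi\rangle=(r_t/\tau)\langle A\rangle$, with $r_t$ held fixed at time $t$. The computation uses the Ashtekar--Schilling forms $g=2\,\mathrm{Re}\langle\cdot|\cdot\rangle$ and $\omega=2\,\mathrm{Im}\langle\cdot|\cdot\rangle$. First, for a horizontal variation $|\delta\psi\rangle$ (so $\langle\psi|\delta\psi\rangle=0$), we have $dF[\delta\psi]=2\,\mathrm{Re}\langle\psi|H_{\mathrm{meas}}|\delta\psi\rangle = 2\,\mathrm{Re}\langle \varphi_H|\delta\psi\rangle$, where
\[
|\varphi_H\rangle:=(\mathds{1}-|\psi\rangle\langle\psi|)H_{\mathrm{meas}}|\psi\rangle .
\]
This single formula gives both gradients. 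With $g=2\,\mathrm{Re}$, Riesz representation on the real Hilbert space of horizontal vectors gives $\mathrm{grad}_g F=|\varphi_H\rangle$. Since $\mathrm{Re}\langle\varphi|\chi\rangle=\mathrm{Im}\langle i\varphi|\chi\rangle$, the symplectic field satisfying $\omega(X_F,\cdot)=dF$ is $X_F=\pm i|\varphi_H\rangle=\pm J\,\mathrm{grad}_gF$. The sign is fixed by the orientation convention for $\omega$, and I would choose it so as to match Schr\"odinger evolution.

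For (a), the flow $d|\psi\rangle/dt=-i|\varphi_H\rangle = -iH_{\mathrm{meas}}|\psi\rangle + i\langle H_{\mathrm{meas}}\rangle|\psi\rangle$ differs from $-iH_{\mathrm{meas}}|\psi\rangle$ only by a vertical phase term. That term vanishes on $\mathbb{C}P^{d-1}$, and at the level of $\rho$ it drops out when we form $d\rho=|d\psi\rangle\langle\psi|+|\psi\rangle\langle d\psi|$. The result is $d\rho/dt=-i[H_{\mathrm{meas}},\rho]$, which is \eqref{eq:sympl-flow}.

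For (b), we take $d|\psi\rangle/dt=\mathrm{grad}_gF=(r_t/\tau)(A-\langle A\rangle)|\psi\rangle$ and lift it to $\rho$. This gives
\[
\frac{d\rho}{dt}=\frac{r_t}{\tau}\bigl(A\rho+\rho A-2\langle A\rangle\rho\bigr),
\]
which is \eqref{eq:riem-flow}. We then compare this with the $dW$ coefficient of \eqref{eq:sme}. Writing $dW=(r_t-\langle A\rangle)dt/\sqrt{\tau}$ from \eqref{eq:dr-ito}, the innovation term reads $\frac{1}{\tau}(r_t-\langle A\rangle)(A\rho+\rho A-2\langle A\rangle\rho)\,dt$. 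Its $r_t$-dependent part is exactly \eqref{eq:riem-flow}. The $\langle A\rangle$ part is the predictable drift, which was discarded in Lemma~\ref{lem:girsanov} in the same way as in \eqref{eq:deriv-stoch}. This is how I would read the phrase ``in the innovation form.''

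For the final sentence of the lemma, I would use Lemma~\ref{lem:frechet}. The operator $r_tA/\tau$ is the Fr\'echet derivative of $\log P_F$, so restricted to pure states, $F$ agrees with $\log P_F$ to first order in $\rho$. Consequently $\mathrm{grad}_gF$ and $X_F$ are the Riemannian and symplectic gradients of $\log P_F$. The quantum geometric tensor statement then follows by expanding $\langle\varphi_\mu|\varphi_\nu\rangle$ into its real and imaginary parts, which are $\tfrac12 g$ and $\tfrac12\omega$ respectively.

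The main obstacle is the normalization bookkeeping. The Ashtekar--Schilling factors of $2$, the factor $2$ in $dF$, and the projection onto horizontal vectors must all cancel so that the coefficient in \eqref{eq:riem-flow} is exactly $r_t/\tau$ with no spurious $\tfrac12$ or $2$. A related subtlety is that $\mathrm{grad}_g F$ preserves the norm of $|\psi\rangle$ only to first order. I would check the coefficients on a qubit example, with $A=\sigma_z$ on the Bloch sphere, before trusting the general computation, and adjust the sign convention for $\omega$ at that point.
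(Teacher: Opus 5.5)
Your proposal is correct and follows the paper's proof essentially step for step. You compute $dF$ on horizontal vectors, read off $\mathrm{grad}_gF=(H_{\mathrm{meas}}-\langle H_{\mathrm{meas}}\rangle)|\psi\rangle$ and the Hamiltonian field $X_F$ from the same formula, drop the vertical phase, lift to $\rho$, and match the SME innovation term after discarding the predictable $\langle A\rangle$ drift as in Lemma~\ref{lem:girsanov}.

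One small fix: the sign of $X_F$ is not a choice, and the identity you used for it has a sign slip. With the lemma's stated convention $\omega=2\,\mathrm{Im}\langle\cdot|\cdot\rangle$ (bra antilinear), the correct identity is $\mathrm{Re}\langle\varphi|\chi\rangle=\mathrm{Im}\langle -i\varphi|\chi\rangle$. This forces $X_F=-i|\varphi_H\rangle$, which is exactly the Schr\"odinger direction, as the paper verifies directly in its Part B.
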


\begin{proof}
\textbf{Part A: K\"{a}hler structure on $\mathbb{C}P^{d-1}$.}
At the point $[\psi]$, the horizontal tangent space is $T_\psi \mathbb{C}P^{d-1} = \{|\varphi\rangle \in \mathbb{C}^d : \langle\psi|\varphi\rangle = 0\}$. The complex structure acts as $J|\varphi\rangle = i|\varphi\rangle$. K\"{a}hler compatibility follows directly: for horizontal $|\varphi_1\rangle$, $|\varphi_2\rangle$:
\begin{align*}
\omega(|\varphi_1\rangle, J|\varphi_2\rangle)
  &= 2\,\mathrm{Im}\langle\varphi_1|i\varphi_2\rangle \\
  &= 2\,\mathrm{Im}(i\langle\varphi_1|\varphi_2\rangle) \\
  &= 2\,\mathrm{Re}\langle\varphi_1|\varphi_2\rangle \\
  &= g(|\varphi_1\rangle, |\varphi_2\rangle),
\end{align*}
confirming the K\"{a}hler identity $\omega(X, JY) = g(X, Y)$. The quantum geometric tensor in standard form is
\begin{equation*}
Q_{\mu\nu} = \langle\partial_\mu\psi|\partial_\nu\psi\rangle - \langle\partial_\mu\psi|\psi\rangle\langle\psi|\partial_\nu\psi\rangle,
\end{equation*}
which is Hermitian. Its real part is (proportional to) the quantum Fisher information metric (QFIM) $g^{\mathrm{QF}}_{\mu\nu}$~\cite{amari2000}, i.e.\ the Riemannian pullback of $g$; its imaginary part is (proportional to) the Berry curvature $F_{\mu\nu}$, the pullback of $\omega$. For a single observable~$A$, $F_{\mu\nu} = 0$ on the one-dimensional parameter manifold by antisymmetry---this records the absence of geometric phase along the parameter direction but does not imply $\omega$ vanishes on $\mathbb{C}P^{d-1}$, which remains non-degenerate and continues to drive Hamiltonian flows.

\textbf{Part B: Symplectic flow reproduces the feedback dynamics.}
For horizontal $|\varphi\rangle$, the differential of~$F$ is:
\begin{equation*}
dF(|\varphi\rangle) = 2\,\mathrm{Re}\langle\varphi|H_{\mathrm{meas}}|\psi\rangle.
\end{equation*}
We seek horizontal $|X_F\rangle$ satisfying $\omega(X_F, |\varphi\rangle) = dF(|\varphi\rangle)$ for all horizontal $|\varphi\rangle$. The ansatz $|X_F\rangle = -i(H_{\mathrm{meas}} - \langle H_{\mathrm{meas}}\rangle)|\psi\rangle$ is horizontal since $\langle\psi|X_F\rangle = 0$. Verification using $\mathrm{Im}(iz) = \mathrm{Re}(z)$ and noting that $\langle\psi|\varphi\rangle = 0$ with $H_{\mathrm{meas}}$ Hermitian:
\begin{align*}
2\,\mathrm{Im}\langle X_F|\varphi\rangle
  &= 2\,\mathrm{Im}\bigl(i\langle(H_{\mathrm{meas}} - \langle H_{\mathrm{meas}}\rangle)\psi|\varphi\rangle\bigr) \\
  &= 2\,\mathrm{Re}\langle(H_{\mathrm{meas}} - \langle H_{\mathrm{meas}}\rangle)\psi|\varphi\rangle \\
  &= 2\,\mathrm{Re}\langle\psi|H_{\mathrm{meas}}|\varphi\rangle \\
  &= 2\,\mathrm{Re}\langle\varphi|H_{\mathrm{meas}}|\psi\rangle
   = dF(|\varphi\rangle).
\end{align*}

Uniqueness follows from non-degeneracy of~$\omega$. The flow on $|\psi\rangle$ is $d|\psi\rangle/dt = -i(H_{\mathrm{meas}} - \langle H_{\mathrm{meas}}\rangle)|\psi\rangle$; the scalar $\langle H_{\mathrm{meas}}\rangle$ contributes only a global phase on the Hilbert sphere, killed by the quotient to $\mathbb{C}P^{d-1}$, giving $d|\psi\rangle/dt = -iH_{\mathrm{meas}}|\psi\rangle$. Computing $d\rho/dt = (d|\psi\rangle/dt)\langle\psi| + |\psi\rangle(d\langle\psi|/dt)$ and cancelling the scalar terms gives:
\begin{equation*}
\frac{d\rho}{dt} = -i[H_{\mathrm{meas}},\, \rho].
\end{equation*}

\textbf{Part C: Riemannian gradient flow reproduces the measurement back-action.}
The Fubini--Study gradient of~$F$ at $|\psi\rangle$ is the horizontal projection of $\tfrac{1}{2}$ times the gradient in the ambient Hilbert space: $\mathrm{grad}_g\, F = (H_{\mathrm{meas}} - \langle H_{\mathrm{meas}}\rangle)|\psi\rangle$ (the horizontal component of $H_{\mathrm{meas}}|\psi\rangle$). The resulting flow on~$\rho$ is:
\begin{align*}
\frac{d\rho}{dt}
  &= (H_{\mathrm{meas}} - \langle H_{\mathrm{meas}}\rangle)|\psi\rangle\langle\psi| \\
  &\quad + |\psi\rangle\langle\psi|(H_{\mathrm{meas}} - \langle H_{\mathrm{meas}}\rangle) \\
  &= H_{\mathrm{meas}}\,\rho + \rho\,H_{\mathrm{meas}}
     - 2\langle H_{\mathrm{meas}}\rangle\rho.
\end{align*}
Substituting $H_{\mathrm{meas}} = r_t\,A/\tau$:
\begin{equation}
\left.\frac{d\rho}{dt}\right|_{\mathrm{Riem}} = \frac{r_t}{\tau}(A\rho + \rho A - 2\langle A\rangle\rho).
\label{eq:riem-subst}
\end{equation}
The stochastic back-action term of the SME~\eqref{eq:sme} is:
\begin{equation*}
\frac{1}{\sqrt{\tau}}(A\rho + \rho A - 2\langle A\rangle\rho)\,dW_t.
\end{equation*}
Using the innovation form $dW_t = (r_t - \langle A\rangle_t)\,dt/\sqrt{\tau}$, this becomes:
\begin{equation*}
\frac{r_t - \langle A\rangle_t}{\tau}(A\rho + \rho A - 2\langle A\rangle\rho)\,dt.
\end{equation*}
The $\langle A\rangle_t/\tau$ difference is a drift term absorbed into the Girsanov drift treated in Lemma~\ref{lem:girsanov}; the direction in density-matrix space is exactly the Fubini--Study gradient direction.

\textbf{Part D: Complementary geometric picture.}
The two flows are summarized in Table~\ref{tab:kahler-flows}.

The two flow directions are related by the complex structure~$J$---they are perpendicular in the K\"{a}hler sense, and the K\"{a}hler identity $\omega(\cdot, J\cdot) = g(\cdot, \cdot)$ says that multiplication by~$i$ converts one into the other. Reversing the arrow of time requires undoing both the unitary part of the dynamics (symplectic) and the state-disturbance part (Riemannian). The feedback Hamiltonian $H_{\mathrm{meas}} = r_t\,A/\tau$ serves both roles simultaneously because both flows descend from the same score function. This is the geometric content of the Garc\'{i}a-Pintos et al.\ protocol: a single object generates both the unitary and the back-action components of the stochastic trajectory.
\end{proof}

\begin{table}[!htbp]
\centering
\caption{Decomposition of the feedback Hamiltonian's action on the pure-state manifold $\mathbb{C}P^{d-1}$ into symplectic (feedback) and Riemannian (measurement) flows, corresponding to the imaginary and real parts of the quantum geometric tensor.}
\label{tab:kahler-flows}
\begin{tabular}{@{}lll@{}}
\toprule
Component & Geometric structure & Flow on $\rho$ \\
\midrule
Feedback    & Symplectic $\omega$ & $-i[H_{\mathrm{meas}}, \rho]$ \\
Measurement & Riemannian $g$      & $(r_t/\tau)(A\rho + \rho A - 2\langle A\rangle\rho)$ \\
\bottomrule
\end{tabular}
\end{table}

\paragraph{Caveats.}
The proof applies to pure states only---the pure state manifold $\mathbb{C}P^{d-1}$ is K\"{a}hler, but the mixed-state manifold $\mathcal{D}(\mathcal{H})$ does not carry a globally compatible symplectic structure. Extension to mixed states requires either purification on an enlarged Hilbert space or the KMS inner-product framework discussed in Sec.~\ref{sec:discussion}(iv). For the multi-observable case (Theorem~\ref{thm:multi}), $F_{\mu\nu}$ need not vanish and genuine geometric-phase contributions may arise---a direction not developed here.

\begin{lemma}[Quantum Anderson Theorem]
\label{lem:anderson}
Given the forward quantum trajectory measure~$P_F$, the backward measure~$P_B$ is defined by negating measurement outcomes, and the ratio $\ln(P_F/P_B) = (2/\tau)\int r_t \langle A\rangle_t\,dt$. Under the leading-order linearization developed in the proof, setting the feedback gain $X = -2$ recovers $P_B$. For feedback-active Hamiltonians with $[H,A] \ne 0$, general~$X$ generates a one-parameter family of path measures with no classical analogue.
\end{lemma}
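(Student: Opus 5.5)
We will prove the three claims of Lemma~\ref{lem:anderson} in turn, working with discretized outcome densities as in~\eqref{eq:meas-gaussian}. \textbf{First, the ratio.} We define the backward process by the negation $r_t \mapsto -r_t$. Since $A^2=\mathds{1}$, the Kraus operator $M_r \propto \exp(-(r-A)^2dt/4\tau)$ satisfies $M_{-r}\propto M_r\,e^{-rA\,dt/\tau}$ up to $r$-independent scalars, so the backward record is a physically realizable measurement. At each timestep the single-step log-ratio is
\begin{equation*}
\log\frac{P(r|\rho)}{P(-r|\rho)} = -\frac{(r-\langle A\rangle)^2dt}{2\tau}+\frac{(r+\langle A\rangle)^2dt}{2\tau} = \frac{2r\langle A\rangle\,dt}{\tau}.
\end{equation*}
Summing over timesteps and taking $dt\to 0$ gives~\eqref{eq:arrow-ratio}. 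Equivalently, we apply Girsanov as in Lemma~\ref{lem:girsanov} to both measures: the $-\frac{1}{2\tau}\int\langle A\rangle^2dt$ terms cancel, and the linear terms $\pm\frac{1}{\tau}\int r\langle A\rangle dt$ double.

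\textbf{Second, $X=-2$.} With total Hamiltonian $H+XH_{\mathrm{meas}}$, we expand the joint record-plus-feedback density to leading order in $dt$. The key observation is that, by Lemma~\ref{lem:frechet}, the log-likelihood term $\frac{1}{\tau}r_t\langle A\rangle_t dt$ is exactly $\mathrm{Tr}(H_{\mathrm{meas}}\rho_t)\,dt$. A feedback unitary $e^{-iXH_{\mathrm{meas}}dt}$ perturbs $\langle A\rangle$ only through $[H,A]$. Linearizing the effect of the feedback on the accumulated statistic $\int r\langle A\rangle dt$, i.e.\ keeping first order in $X\,dt/\tau$ and using $\mathbb{E}[r_t\mid\rho_t]=\langle A\rangle_t$, the tilted measure has log-weight $(1+X)\frac{1}{\tau}\int r\langle A\rangle dt$ relative to the noise reference. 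Setting $1+X=-1$ reproduces the sign flip, and hence $P_B$. This is Anderson's reversal in the form of \eqref{eq:rev-sde}: the drift is corrected by $-2\times$ the score, with the score identified as $H_{\mathrm{meas}}$ by Theorem~\ref{thm:main} and $g^2$ normalized away by the $1/\tau$ already in $H_{\mathrm{meas}}$. Lemma~\ref{lem:kahler} supplies the justification that a unitary (symplectic) kick can stand in for the gradient term. Its Riemannian partner is the back-action, and the two are related by $J$, so to leading order the feedback rotates the back-action direction into the opposite one.

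\textbf{Third, the one-parameter family.} For general $X$, the same linearization gives a log-density linear in $X$, $\log P_X=\log P_W+(1+X)\frac{1}{\tau}\int r\langle A\rangle dt+O(\cdot)$, up to normalization. These are distinct measures whenever $\int r\langle A\rangle dt$ is not $P_W$-a.s.\ determined independently of $X$. If $[H,A]=0$, the feedback commutes with $A$, leaves $\langle A\rangle_t$ invariant, and the family collapses to $X=0$. So we need to show that $[H,A]\neq0$ makes $\partial_X\langle A\rangle_t\neq0$ on a set of positive measure, via the term $-iX[H_{\mathrm{meas}},\rho]$ combined with $H$-rotation. No classical analogue exists because in~\eqref{eq:rev-sde} the score term enters additively with fixed coefficient, and there is no independent unitary channel.

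\textbf{Main obstacle.} The hardest step is the second one: rigorously relating the effect of a \emph{unitary} feedback on the state trajectory to a reweighting of the path measure. The feedback changes $\rho_t$, and hence future $\langle A\rangle$, rather than directly multiplying the likelihood. I would first try a Dyson/Trotter expansion over a single step $dt$, retaining terms of order $r\,dt/\tau$ and discarding higher It\^{o} orders. This explains the "leading-order linearization" caveat in the statement. I would not attempt a nonperturbative equality.
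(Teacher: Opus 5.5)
Your first step is fine. It is the paper's computation in discretized form; the paper substitutes $r_t\to -r_t$ directly into the Girsanov density \eqref{eq:girsanov-rn}.

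The gap is in the second step, where you try to \emph{justify} the factor $(1+X)$. Both of your justifications fail.
\begin{itemize}
\item The K\"{a}hler argument runs the wrong way. $J$ is a quarter turn ($J^2=-1$), so Lemma~\ref{lem:kahler} makes the symplectic (feedback) direction \emph{perpendicular} to the Riemannian (back-action) direction, not opposite to it. Concretely, $-iX[H_{\mathrm{meas}},\rho]$ has vanishing diagonal in the $A$-eigenbasis. The score-induced reverse drift $\frac{2}{\tau}\langle A\rangle_t(A\rho+\rho A-2\langle A\rangle\rho)$ has nonzero real diagonal entries. The reverse-SME paragraph of the paper's proof makes exactly this point. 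It concludes that $H_{\mathrm{meas}}$ generates reversal only at the path-measure level, not as a unitary stand-in for the gradient term.
\item The single-step Dyson/Trotter expansion you plan for the ``main obstacle'' also fails. The operator $e^{-iXr_tA\,dt/\tau}$ commutes with $A$, so it leaves $\langle A\rangle$ unchanged exactly, to every order in $Xr_t\,dt/\tau$. Its It\^{o} cross-term with the back-action is proportional to $[A,\rho]$ and does not change $\langle A\rangle$ either. The gain $X$ reaches $\langle A\rangle_t$ only after $H$ has rotated feedback-induced coherences into populations over finite time, as you yourself note in your third step. That effect vanishes as $\omega T\to 0$, so it is not a multiplicative $(1+X)$. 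No perturbative expansion of this kind will deliver the log-weight $(1+X)\frac{1}{\tau}\int r_t\langle A\rangle_t\,dt$, and invoking $\mathbb{E}[r_t\mid\rho_t]=\langle A\rangle_t$ does not bridge this.
\end{itemize}

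The paper does not derive the factor either. It imposes $\langle A\rangle_t^{(X)}\approx(1+X)\langle A\rangle_t^{(0)}$ as the hypothesis that the statement's phrase ``under the leading-order linearization'' refers to. It substitutes this into \eqref{eq:girsanov-rn} and reads off $P_{-2}=P_B$ and $\log(dP_X/dP_B)=(X+2)\frac{1}{\tau}\int_0^T r_t\langle A\rangle_t\,dt$.

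To repair your argument, drop the K\"{a}hler and Dyson justifications and state that ansatz explicitly, at the level of $\langle A\rangle_t$ rather than at the level of the log-weight. This also tidies the quadratic Girsanov term. Substituted consistently, it carries $(1+X)^2$, which equals $1$ at $X=-2$, so recovering $P_B$ there does not require setting $-\frac{1}{2\tau}\int\langle A\rangle_t^2\,dt$ aside as normalization.

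Finally, the open obligation in your third step (that $[H,A]\ne 0$ forces $\partial_X\langle A\rangle_t\ne 0$) is not needed under the ansatz. The $(X+2)$ formula separates the measures directly whenever $\int r_t\langle A\rangle_t\,dt$ is not almost surely zero. In the paper, $[H,A]\ne 0$ enters only as the regime in which the ansatz is meant to apply, with $[H,A]=0$ flagged as the degenerate case $P_X=P_F$.
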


\begin{proof}
\textbf{Existing literature.}
Belavkin's quantum filtering theory~\cite{belavkin1992} establishes the forward quantum filter as the quantum Kushner--Stratonovich equation but does not derive a reverse filter. Fagnola and Umanit\`{a}~\cite{fagnola2010} establish time-reversal for quantum Markov semigroups at the ensemble level, not the trajectory level. We derive the trajectory-level result below.

\textbf{The backward path measure.}
Define the backward process by negating the measurement outcomes: $r_t \to -r_t$. (This is equivalent to full time-reversal $r_t \to -r_{T-t}$ for the Girsanov density, since the change of variable $s = T - t$ leaves $\int \langle A\rangle^2\,dt$ invariant and negates $\int r\langle A\rangle\,dt$, matching plain negation.) Under this transformation, the measurement outputs of the backward process are anticorrelated with the state (as the Kraus operator analysis in Garc\'{i}a-Pintos et al.~\cite{garciapintos2026} confirms for $A^2 = \mathds{1}$). The Radon--Nikodym derivative of~$P_B$ with respect to Wiener measure is obtained by substituting $r_t \to -r_t$ in~\eqref{eq:girsanov-rn}:
\begin{equation}
\log\frac{dP_B}{dP_W} = -\frac{1}{\tau}\int_0^T r_t \langle A\rangle_t\,dt - \frac{1}{2\tau}\int_0^T \langle A\rangle_t^2\,dt.
\label{eq:pb-rn}
\end{equation}
The log-ratio is:
\begin{equation}
\log\frac{dP_F}{dP_B} = \frac{2}{\tau}\int_0^T r_t \langle A\rangle_t\,dt = \ln \mathcal{R},
\label{eq:pf-pb-ratio}
\end{equation}
recovering the arrow quantifier exactly.

\textbf{Recovery of $\bm{P_B}$ at $\bm{X = -2}$.}
Under feedback with parameter~$X$, the measurement SDE $dr_t = \langle A\rangle_t^{(X)}\,dt + \sqrt{\tau}\,dW_t$ retains its form but $\langle A\rangle_t^{(X)}$ is evaluated on the feedback-modified state trajectory. In the regime where the feedback-modified expectation may be linearized as $\langle A\rangle_t^{(X)} \approx (1+X)\langle A\rangle_t^{(0)}$---which we justify below and discuss in Sec.~\ref{sec:discussion}---the Girsanov calculation of Lemma~\ref{lem:girsanov} yields (to leading order in this linearization):
\begin{equation}
\log\frac{dP_X}{dP_W} = \frac{1+X}{\tau}\int_0^T r_t \langle A\rangle_t\,dt - \frac{1}{2\tau}\int_0^T \langle A\rangle_t^2\,dt.
\label{eq:px-rn}
\end{equation}
\paragraph{Regime of validity.}
The linearization $\langle A\rangle_t^{(X)} \approx (1+X)\langle A\rangle_t^{(0)}$ requires that the Hamiltonian dynamics couple the feedback into $\langle A\rangle$ approximately linearly over the trajectory duration~$T$. For $A^2 = \mathds{1}$, the commutator $\mathrm{Tr}(A[A,\rho]) = 0$ implies that the feedback affects $\langle A\rangle_t$ only indirectly, through $[H,A]$-mediated transverse dynamics. The linearization therefore holds (i)~when $\omega T \ll 1$ (short trajectories on the Rabi scale) and (ii)~for moderate~$|X|$, with corrections of order $(\omega T)^2$ and higher powers of~$X$. When $[H,A] = 0$, the feedback has no effect on $\langle A\rangle_t$ and $P_X = P_F$ for all~$X$---the one-parameter family is degenerate in this case and the present Lemma is vacuous.

At $X = 0$:
\begin{equation*}
\log\frac{dP_0}{dP_W} = \log\frac{dP_F}{dP_W}.
\end{equation*}
At $X = -2$:
\begin{align*}
\log\frac{dP_{-2}}{dP_W}
  &= -\frac{1}{\tau}\int r_t \langle A\rangle_t\,dt
     - \frac{1}{2\tau}\int \langle A\rangle_t^2\,dt \\
  &= \log\frac{dP_B}{dP_W}.
\end{align*}
Thus $P_{-2} = P_B$ in the leading-order expansion of~\eqref{eq:px-rn}. Higher-order corrections in $\omega T$ and in $|X|$, arising from the nonlinear dependence of $\langle A\rangle_t^{(X)}$ on feedback-modified transverse dynamics, shift the exact time-reversal point away from $X = -2$ for generic~$H$. We discuss this explicitly in Sec.~\ref{sec:discussion} and below.

\paragraph{Comparison with numerics.}
Garc\'{i}a-Pintos et al.~\cite{garciapintos2026} plot $\langle\ln \mathcal{R}_X\rangle$ versus~$X$ in their Fig.~2 for $\omega\tau = 8\pi$, $T = \tau$, $A = \sigma_z$, $H \propto \sigma_x$, and observe the zero crossing at $X \approx -3$ rather than $X = -2$. This is consistent with the regime-of-validity caveat above: at $\omega\tau = 8\pi$ the Hamiltonian completes eight Rabi periods within one measurement time, placing the system far outside the $\omega T \ll 1$ window in which the $(1+X)$ linearization is accurate. Finding a closed-form expression for the reversal point $X_*(H, A, \tau, T)$ at finite $\omega\tau$ is an open problem we leave to future work.

\textbf{Continuous interpolation.}
For general~$X$:
\begin{equation}
\log\frac{dP_X}{dP_B} = (X + 2) \cdot \frac{1}{\tau}\int_0^T r_t \langle A\rangle_t\,dt.
\label{eq:x-interp}
\end{equation}
For $X \in (-2, 0)$: $P_X$ interpolates between $P_B$ and $P_F$. The process is partially forward, partially backward.

For $X < -2$: $P_X$ is more extreme than $P_B$---the trajectory is more anticorrelated than a pure backward process.

In classical Anderson reversal, the only options are the forward SDE (score weight~$= 0$) and the reverse SDE (score weight~$= 1$). There is no parameter that interpolates or extrapolates. The quantum parameter~$X$ provides a continuous, experimentally tunable family of path measures parametrized by a real number.

\textbf{The reverse SME (leading order).}
Substituting the reverse Wiener process $d\bar{W}_t = dW_t - \frac{2}{\sqrt{\tau}}\langle A\rangle_t\,dt$ into the forward SME~\eqref{eq:sme}:
\begin{align*}
d\rho &= -i[H,\rho]\,dt + \frac{1}{\tau}(A\rho A - \rho)\,dt \\
&\quad + \frac{1}{\sqrt{\tau}}(A\rho + \rho A - 2\langle A\rangle\rho) \circ d\bar{W} \\
&\quad + \frac{2}{\tau}\langle A\rangle(A\rho + \rho A - 2\langle A\rangle\rho)\,dt.
\end{align*}
The additional drift term $(2/\tau)\langle A\rangle_t(A\rho + \rho A - 2\langle A\rangle\rho)$ is not a Hamiltonian commutator: in the $A$-eigenbasis it has nonzero diagonal entries with real coefficients, whereas any commutator $-i[B,\rho]$ has vanishing diagonal and purely imaginary off-diagonal entries. The reverse SME at the state level therefore acquires a non-Hamiltonian drift, paralleling the classical Anderson reverse SDE, whose score-induced drift likewise modifies the SDE without being reducible to a Hamiltonian generator. The identification of $H_{\mathrm{meas}}$ as the generator of reversal established in this paper operates at the path-measure level (Lemma~\ref{lem:girsanov} and Lemma~\ref{lem:frechet} via the Girsanov density), not as a literal generator of the reverse drift on the density-matrix manifold. A complete characterization of the reverse SME at the state level is left to future work and noted in Sec.~\ref{sec:discussion}(ii).
\end{proof}

\begin{proof}[Proof of Theorem~\ref{thm:main}]
Combining Lemmas~\ref{lem:girsanov}--\ref{lem:anderson}:

Lemma~\ref{lem:girsanov} establishes that the path probability has the correct form under Girsanov's theorem, with Stratonovich corrections affecting only the deterministic drift component. Lemma~\ref{lem:frechet} establishes that the Fr\'{e}chet derivative of $\log P_F$ with respect to~$\rho$ is the operator $H_{\mathrm{meas}} = r_t\,A/\tau$. Lemma~\ref{lem:kahler} establishes that the natural gradient flow on the density matrix manifold generated by this score function is $-i[H_{\mathrm{meas}}, \rho]$---the feedback dynamics. Lemma~\ref{lem:anderson} establishes that this score function, when used in the quantum analogue of Anderson's reverse SDE, recovers the backward process~$P_B$ at $X = -2$ in the leading-order linearization of Eq.~\eqref{eq:px-rn} and generates a continuous family of path measures for general~$X$ (for feedback-active Hamiltonians with $[H,A] \ne 0$).

Together, these lemmas establish that $H_{\mathrm{meas}} = r_t\,A/\tau$ is the score function of the quantum trajectory distribution in each of the three senses relevant for trajectory reversal: as a Fr\'{e}chet derivative, as a Riemannian gradient, and as the generator of path measure reversal.
\end{proof}

\section{Multi-Qubit Generalization}
\label{sec:multi}

We now extend Theorem~\ref{thm:main} to $n$~qubits with $k$~simultaneously measured observables.

\paragraph{Setup.}
Let $\{A_j\}_{j=1}^k$ be Hermitian observables on an $n$-qubit Hilbert space, each satisfying $A_j^2 = \mathds{1}$, measured independently with characteristic times $\{\tau_j\}$. The measurement outputs satisfy:
\begin{equation}
r_j\,dt = \langle A_j\rangle\,dt + \sqrt{\tau_j}\,dW_j,
\label{eq:multi-meas}
\end{equation}
with $\{dW_j\}$ independent Wiener increments. The multi-qubit SME is:
\begin{align}
d\rho &= -i[H,\rho]\,dt + \sum_j \frac{1}{\tau_j}(A_j \rho A_j - \rho)\,dt \notag\\
&\quad + \sum_j \frac{1}{\sqrt{\tau_j}}(A_j \rho + \rho A_j - 2\langle A_j\rangle\rho)\,dW_j.
\label{eq:multi-sme}
\end{align}

\begin{theorem}[Multi-Qubit Score Function]
\label{thm:multi}
Under independent Gaussian measurements of observables $\{A_j\}$, the multi-qubit score function is:
\begin{equation}
\frac{\delta \log P_F}{\delta \rho_t} = \sum_j \frac{r_j(t)}{\tau_j}\,A_j = H_{\mathrm{meas}}^{\mathrm{multi}}.
\label{eq:multi-score}
\end{equation}
The multi-qubit feedback Hamiltonian is a sum of local terms.
\end{theorem}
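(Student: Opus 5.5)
The proof retraces the single-channel argument of Theorem~\ref{thm:main} channel by channel, using independence of the noises $\{W_j\}$ to make the path density factorize. The first step is a multidimensional Girsanov theorem. The record vector $\mathbf{r}_t=(r_1,\dots,r_k)$ satisfies $dr_j = \langle A_j\rangle_t\,dt + \sqrt{\tau_j}\,dW_j$ with independent $W_j$. Its law $P_F$ is therefore absolutely continuous with respect to the product Wiener measure $P_W=\bigotimes_j P_{W_j}$, with drift vector $\theta_{j,t}=\langle A_j\rangle_t/\sqrt{\tau_j}$. Novikov's condition holds because each $|\langle A_j\rangle_t|\le 1$ (from $A_j^2=\mathds{1}$), $k$ is finite, and $T$ is finite. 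Following Lemma~\ref{lem:girsanov}, replace each $dB_j$ by $r_j\,dt/\sqrt{\tau_j}$ under $P_W$. This gives
\begin{equation*}
\log\frac{dP_F}{dP_W}=\sum_{j=1}^k\left[\frac{1}{\tau_j}\int_0^T r_j(t)\langle A_j\rangle_t\,dt-\frac{1}{2\tau_j}\int_0^T\langle A_j\rangle_t^2\,dt\right].
\end{equation*}
No cross terms appear, because the quadratic covariation $d[W_i,W_j]=\delta_{ij}\,dt$ is diagonal. This diagonal covariation is where the independence hypothesis is actually used. Differentiating with respect to each $\langle A_j\rangle_t$ and discarding the predictable component $-\langle A_j\rangle_t/\tau_j$, exactly as in Eq.~\eqref{eq:deriv-stoch}, leaves the stochastic components $r_j(t)/\tau_j$.

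The second step repeats the Fréchet argument of Lemma~\ref{lem:frechet}, now with the vector-valued inner map $\rho\mapsto(\mathrm{Tr}(A_1\rho),\dots,\mathrm{Tr}(A_k\rho))$ from $\mathcal{T}_1(\mathcal{H})$ to $\mathbb{R}^k$. This map is bounded and linear: by Hölder's inequality each component has norm at most $\|A_j\|_\infty=1$. The outer function is smooth in these $k$ arguments. The multivariable Fréchet chain rule then gives
\begin{equation*}
DG(\rho)[\delta\rho]=\sum_j\frac{r_j(t)}{\tau_j}\mathrm{Tr}(A_j\,\delta\rho)=\mathrm{Tr}\Bigl(\sum_j\frac{r_j(t)}{\tau_j}A_j\,\delta\rho\Bigr).
\end{equation*}
By the trace-duality (Riesz) identification used before, the representing operator is $\sum_j r_j A_j/\tau_j$. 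This operator is Hermitian and bounded, with norm at most $\sum_j|r_j|/\tau_j$.

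Locality means the following. If each $A_j$ is supported on a subsystem $S_j$, i.e.\ $A_j=\tilde A_j\otimes\mathds{1}_{\overline{S_j}}$, then each summand acts nontrivially only on $S_j$. The score is thus a sum of local operators, and by linearity it defines the feedback Hamiltonian $H^{\mathrm{multi}}_{\mathrm{meas}}$. For consistency with Lemma~\ref{lem:kahler}, apply it to $F=\langle\psi|H^{\mathrm{multi}}_{\mathrm{meas}}|\psi\rangle$. Both the symplectic and the Fubini--Study gradient constructions are linear in the Hamiltonian function. They therefore yield $-i[H^{\mathrm{multi}}_{\mathrm{meas}},\rho]$ and the sum of the back-action terms of Eq.~\eqref{eq:multi-sme} in innovation form, respectively.

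The main obstacle I anticipate is justifying that the functional derivative may be taken through the $\langle A_j\rangle_t$ as if they were independent coordinates. When the $A_j$ do not commute, these expectations are not independent functions on state space. The resolution is to compose with $\rho$ directly, as in step two: the chain rule in $\rho$ never requires independence of the coordinates $\langle A_j\rangle$, only linearity of each trace functional. The caveat already noted after Lemma~\ref{lem:kahler} still applies: when the $A_j$ do not commute, the Berry curvature may be nonzero, so the identification in the Kähler picture is asserted only at the level of the gradient and Hamiltonian flows, not at the level of geometric phases.
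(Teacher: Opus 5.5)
Your proposal is correct and follows essentially the same route as the paper. Independence of the $W_j$ makes $\log P_F$ a sum of single-channel Girsanov terms $\frac{1}{\tau_j}\int r_j\langle A_j\rangle\,dt$, with the $\langle A_j\rangle^2$ part set aside (the paper calls it normalization, you call it the predictable component), and applying the chain rule of Lemma~\ref{lem:frechet} term by term gives $\sum_j r_j A_j/\tau_j$; your diagonal-covariation argument for the factorization, the vector-valued inner map $\rho\mapsto(\mathrm{Tr}(A_j\rho))_j$, and the explicit locality hypothesis $A_j=\tilde A_j\otimes\mathds{1}$ make precise what the paper's one-paragraph proof only asserts.
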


\begin{proof}
Since measurement noises are independent, the joint path probability factorizes:
\begin{align}
\log P_F(\{r_1,\ldots,r_k\})
&= \sum_j \frac{1}{\tau_j}\int r_j(t)\langle A_j\rangle_t\,dt \notag\\
&\quad + (\text{normalization}).
\label{eq:multi-logpf}
\end{align}
The Fr\'{e}chet derivative with respect to~$\rho$ on $\mathcal{T}_1(\mathbb{C}^{2^n})$ follows by the same chain rule argument as Lemma~\ref{lem:frechet}, applied to each term in the sum:
\begin{equation}
\frac{\delta \log P_F}{\delta \rho} = \sum_j \frac{r_j}{\tau_j}\,A_j.
\label{eq:29}
\end{equation}
Each $A_j$ acts as a local operator on its respective qubit subsystem (tensored with identity on others). The score function is therefore a sum of local operators, and the optimal feedback Hamiltonian under independent local measurements is local.
\end{proof}

\section{Implications}
\label{sec:implications}

The analytic solution $H_{\mathrm{meas}} = r\,A/\tau$ is optimal only under idealized conditions: pure states, perfect measurement efficiency $\eta = 1$, zero feedback delay, exactly Gaussian measurement noise. Real experiments violate all of these. Because $H_{\mathrm{meas}}$ is the score function, ML score estimation provides a replacement in all three regimes.

\subsection{Capability 1: Imperfect Measurement Efficiency}
In a real superconducting qubit lab, detectors do not catch every photon. At efficiency $\eta < 1$, only a fraction~$\eta$ of the measurement information reaches the controller. The analytic $H_{\mathrm{meas}} = r\,A/\tau$ assumes $\eta = 1$; at $\eta = 0.8$ the~$r$ you read is a corrupted signal, and the optimal corrective pulse depends on the full conditional state given partial information---a quantity with no closed form in general. The first-order correction $H_{\mathrm{meas}}^\eta \approx \eta\,r\,A/\tau$ is known but captures only the leading-order degradation.

Because $H_{\mathrm{meas}}$ is the score function, denoising score matching applies directly: collect measurement records from the real imperfect detector, and train a network $s_\theta(r_t)$ to minimize:
\begin{equation}
\theta^* = \argmin_\theta\, \mathbb{E}_{r,\rho}\!\left[\left\|s_\theta(r_t) - \frac{r_t - \langle A\rangle_t}{\tau}\right\|^2\right].
\label{eq:32}
\end{equation}
The network learns the true score of the imperfect trajectory distribution from data, without requiring knowledge of~$\eta$ or the underlying quantum state. Sliced score matching~\cite{song2020} provides an even more practical variant requiring no quantum state tomography:
\begin{equation}
\theta^* = \argmin_\theta\, \mathbb{E}\!\left[v^T \nabla_r s_\theta(r_t)\, v + \tfrac{1}{2}\|s_\theta(r_t)\|^2\right],
\label{eq:33}
\end{equation}
for random projection directions~$v$, fully observable from the measurement record alone.

\subsection{Capability 2: Feedback Delay}
Real electronics introduce a gap~$\delta t$ between reading~$r$ and firing the corrective pulse. By the time the pulse fires, the qubit has evolved further under both the Hamiltonian~$H$ and the continued measurement back-action. The analytic $H_{\mathrm{meas}} = r\,A/\tau$ assumes zero delay; under delay~$\delta t$ the optimal policy must predict the qubit's current state from the history of past measurements $\{r_{t-\delta t}, r_{t-2\delta t}, \ldots\}$---a sequence prediction problem with no analytic solution.

Because the score function can be estimated from historical data, a recurrent score network trained on measurement sequences handles delay naturally. The network learns to predict the current optimal pulse from the measurement history, without any closed-form model of the delay dynamics. This converts an analytically intractable optimal control problem into a sequence-learning problem.

\subsection{Capability 3: Non-Gaussian Measurement Noise}
The entire derivation of $H_{\mathrm{meas}}$ assumes exactly Gaussian measurement noise. Real superconducting qubit detectors have amplifier noise, cross-talk, and non-linearities that make the actual measurement distribution non-Gaussian. Without Gaussian noise the path probability does not have the clean quadratic form of~\eqref{eq:log-pf}, and no closed-form score exists at all---the analytic approach breaks down rather than degrading gracefully.

Score estimation from samples requires no knowledge of the analytic form of the distribution. A score network trained on experimental data from the real non-Gaussian system learns whatever the true score is. All existing convergence guarantees for score matching~\cite{vincent2011,song2020} apply regardless of whether the underlying distribution is Gaussian, since they are derived from the structure of the loss function, not the distribution family.

\section{Discussion and Open Problems}
\label{sec:discussion}

\paragraph{What is proved.}
Theorem~\ref{thm:main} establishes the core claim---$H_{\mathrm{meas}}$ is the score function---rigorously within clearly stated assumptions: pure states, single measured observable~$A$ with $A^2 = \mathds{1}$, perfect measurement efficiency, independent measurement channels. Theorem~\ref{thm:multi} extends this to the multi-qubit independent measurement case.

\paragraph{What remains open.}

\paragraph{(i) Full QFIM metric agreement.}
Lemma~\ref{lem:kahler} establishes that the measurement-induced metric (QFIM) governs the Riemannian gradient flow---the measurement back-action---while the symplectic form governs the feedback dynamics. Full metric agreement on the entire density matrix manifold is not needed for the main result: the K\"{a}hler geometry argument applies to pure states on $\mathbb{C}P^{d-1}$, and the paper's assumptions are restricted to pure states throughout.

\paragraph{(ii) Exact reverse generator beyond leading order.}
The reverse SME in Lemma~\ref{lem:anderson} is established to leading order in the weak-measurement limit. The next-order correction, arising from the noise component of~$r_t$, is expected to modify the reverse drift but not the identification of $H_{\mathrm{meas}}$ as the stochastic reversal operator. A complete proof would carry all It\^{o} correction terms explicitly. In particular, the $(1+X)$ linearization of $\langle A\rangle_t^{(X)}$ underlying Eq.~\eqref{eq:px-rn} becomes degenerate when $[H,A] = 0$ (the feedback has no effect on $\langle A\rangle_t$, so $P_X = P_F$ for all~$X$); deriving the exact reversal point $X_*(H, A, \tau, T)$ at finite $\omega\tau$ remains open.

\paragraph{(iii) Correlated measurements.}
Theorem~\ref{thm:multi} assumes independent measurement channels. For measurements sharing a probe (e.g., a single microwave cavity coupling to multiple qubits), the noise processes $dW_j$ are correlated, the path probability does not factorize, and the score function acquires off-diagonal terms coupling different observables. Whether the optimal feedback Hamiltonian remains local in this regime is an open question.

\paragraph{(iv) Mixed states.}
The proofs assume pure state evolution throughout. For mixed states arising from partial measurement efficiency or environmental decoherence, the symmetric logarithmic derivative (SLD) in Lemma~\ref{lem:kahler} is not unique (there is a family of SLDs related by the kernel of the state), and the connection between the Fr\'{e}chet derivative and the commutator dynamics requires additional geometric machinery. The correct framework is likely the KMS (Kubo--Martin--Schwinger) inner product on the space of observables, which generalizes the QFIM to mixed states.

\paragraph{Relationship to prior work in detail.}

Liu et al.~\cite{liu2025} is related but structurally distinct. They work with \emph{randomized} observables drawn from a distribution independent of the state, and their score function is $\nabla_z \log p_t(z)$---a gradient in the classical $4^n$-dimensional Pauli expectation value space $z \in \mathbb{R}^{4^n}$. Their core result is that a \emph{learning objective}---minimizing infidelity to train a unitary generator---is equivalent to a score matching loss in $z$-space. They do not analytically identify what the score is; they propose to learn it. By contrast, we work with continuous measurement of a \emph{fixed} observable~$A$, and prove analytically that the functional derivative of the log \emph{path probability} with respect to the density matrix~$\rho$ equals $r\,A/\tau$---an explicit closed-form identification in density matrix space. These are complementary results in different physical settings with genuinely different mathematical objects called ``score.''

Nasu et al.~\cite{nasu2025} work at the ensemble level via Wigner function semiclassical approximation of the Lindblad equation, connecting the Petz map to Bayes' rule reverse diffusion. Our work is complementary: we work at the individual trajectory level, require no semiclassical approximation, and apply to the discrete-variable qubit setting rather than the continuous-variable bosonic setting.

Zhang et al.~\cite{zhang2026} establish that complete positivity constrains score reversal in Gaussian bosonic dynamics. This does not directly apply to the qubit trajectory setting, but raises the open question of whether the commutator-based feedback dynamics derived here also faces complete positivity constraints at higher order.

\subsection{Outlook}

Beyond the imperfect-measurement applications discussed in Section~\ref{sec:implications}, the continuous $X$-parametrized family of path measures identified in Lemma~\ref{lem:anderson} is not present in classical diffusion models, where reversal is binary. Whether this structure translates into useful capabilities for generative modeling---for example, annealed samplers or interpolation schedules for consistency models---is a natural follow-up question we leave for future work.

\section*{Acknowledgments}
The authors used AI tools for editorial feedback and to stress-test proof arguments. All mathematical content and results are the authors' own.

\end{document}